\theoremstyle{plain}
\newtheorem{theorem}{Theorem}[section]
\newtheorem{corollary}[theorem]{Corollary}
\newtheorem{lemma}[theorem]{Lemma}
\theoremstyle{definition}
\theoremstyle{remark}
\newtheorem*{remark}{Remark}
\numberwithin{equation}{section}
\numberwithin{figure}{section}
\newcommand{\bbR}{\mathbb{R}}
\newcommand{\bbZ}{\mathbb{Z}}
\newcommand{\bbQP}{\mathbb{Q}_{+}}
\newcommand{\eps}{\varepsilon}
\newcommand{\alsu}{\mathrm{a.s.}\xspace}
\newcommand{\al}{\alpha}
\newcommand{\la}{\lambda}
\newcommand{\te}{\theta}
\newcommand{\tu}{\theta_1}
\newcommand{\td}{\theta_2}
\newcommand{\ga}{\gamma}
\newcommand{\xf}{x_{\infty}}
\newcommand{\tf}{\tau_{\infty}}
\newcommand{\ap}{(\alpha, p)}
\newcommand{\gt}{(\gamma, \theta)}
\newcommand{\vapdk}{\((\alpha, p)\)-probability variant of the Daley--Kendall model\xspace}
\newcommand{\vapmt}{\((\alpha, p)\)-probability variant of the Maki--Thompson model\xspace}
\newcommand{\apdk}{\((\alpha, p)\) Daley--Kendall model\xspace}
\newcommand{\apmt}{\((\alpha, p)\) Maki--Thompson model\xspace}
\newcommand{\dk}{Daley--Kendall model\xspace}
\newcommand{\mt}{Maki--Thompson model\xspace}
\newcommand{\pmax}{M^{(N)}\xspace}
\newcommand{\tpmax}{\tilde{M}^{(N)}\xspace}
\newcommand{\lmax}[1]{m^{\star}_{\mathrm{#1}}\xspace}
\newcommand{\tabs}{\tau^{(N)}\xspace}
\newcommand{\ttabs}{\tilde{\tau}^{(N)}\xspace}
\newcommand{\LW}[1]{\ensuremath{W_0(#1)}}
\begin{document}

\title[The maximum proportion of spreaders]{The maximum proportion of spreaders\\in stochastic rumor models}

\author[E. Lebensztayn]{Elcio Lebensztayn}
\address[E. Lebensztayn]{Instituto de Matem\'atica, Estat\'istica e Computa\c{c}\~ao Cient\'ifica, Universidade Estadual de Campinas (UNICAMP), CEP 13083-859, Campinas, SP, Brasil.}
\email{lebensz@unicamp.br}
\author[P. M. Rodriguez]{Pablo M. Rodriguez}
\address[P. M. Rodriguez]{Centro de Ci\^encias Exatas e da Natureza, Universidade Federal de Pernambuco (UFPE), CEP 50670-901, Recife, PE, Brasil.}
\email{pablo@de.ufpe.br}


\keywords{Stochastic rumor, Daley--Kendall, Maki--Thompson, Limit theorems, Maximum of spreaders.}
\subjclass[2020]{Primary 60F15, 60J28; Secondary 60G17.}
\date{\today}

\begin{abstract}
We examine a general stochastic rumor model characterized by specific parameters that govern the interaction rates among individuals.
Our model includes the \((\alpha, p)\)-probability variants of the well-known Daley--Kendall and Maki--Thompson models.
In these variants, a spreader involved in an interaction attempts to transmit the rumor with probability \(p\); if successful, any spreader encountering an individual already informed of the rumor has probability \(\alpha\) of becoming a stifler.
We prove that the maximum proportion of spreaders throughout the process converges almost surely, as the population size approaches~\(\infty\).
For both the classical Daley--Kendall and Maki--Thompson models, the asymptotic proportion of the rumor peak is \(1 - \log 2 \approx 0.3069\).
\end{abstract}

\maketitle

\section{Introduction}
\label{S: Introduction}

The phenomenon of rumor spreading is significant within the applied social sciences and has implications across various fields, including Economics, Politics, and Public Health. Thus, it is not surprising that its theoretical or computational modeling is currently an active area of research; see, for example, \cite{di-crescenzo-24,di-crescenzo, esmaeeli,mahmoud,paraggio,MRR,san-martin}. From a mathematical perspective, one of the earliest references is the work of Daley and Kendall~\cite{DK}. 
The authors present a pioneering analysis of rumor processes and point out the main differences between these processes and epidemic models. 
Generally, for epidemiological systems (such as the SIR model), two infected individuals do not influence each other's infection.
In contrast, a fundamental assumption in rumor modeling is that an individual who is aware of the rumor will continue to spread it until encountering someone else who also knows the rumor. 
At that moment, the individual transmitting the rumor perceives that there is no longer a reason to continue sharing it.
Besides that, as opposed to epidemic models, rumor processes do not exhibit threshold behavior.

To describe the basic version of the model proposed by Daley and Kendall~\cite{DK}, we consider a closed homogeneously mixing population of \(N + 1\) individuals. 
At any time \(t \geq 0\), people are classified into three disjoint categories, consisting of:
\begin{itemize}
\item \emph{ignorant individuals}, who are not aware of the rumor;
\item \emph{spreader individuals}, who are disseminating the rumor;
\item \emph{stifler individuals}, who know the rumor but have stopped communicating it;
\end{itemize}
and the number of individuals in each category in $t$ is indicated by $X(t)$, $Y(t)$, $Z(t)$, respectively. Initially, \(X(0) = N\), \(Y(0) = 1\) and \(Z(0) = 0\), and \(X(t) + Y(t) + Z(t) = N + 1\) for all~\(t \geq 0\).
Of course, the random variables \(X(t)\), \(Y(t)\), \(Z(t)\) depend on \(N\); this dependence is explicitly displayed only when necessary.

In the \dk, the rumor is propagated through {\it pairwise contacts} between individuals, so that \(\{(X(t), Y(t))\}_{t \geq 0}\) is a continuous-time Markov chain with transitions and corresponding rates given by 
\begin{equation}\label{eq:transDK}
{\allowdisplaybreaks
\begin{array}{ccc}
\text{interaction} \qquad &\text{transition} \qquad &\text{rate}\\[0.1cm]
\text{spreader -- ignorant} \qquad &(-1, 1) \qquad &X Y,\\[0.1cm]
\text{spreader -- spreader} \qquad &(0, -2) \qquad &\displaystyle\binom{Y}{2},\\[0.36cm]
\text{spreader -- stifler} \qquad &(0, -1) \qquad &Y (N + 1 - X - Y).
\end{array}}%
\end{equation}
In the first case, a spreader tells the rumor to an ignorant person, who then becomes a spreader. 
The other two transitions correspond to spreader--spreader and spreader--stifler meetings, resulting in the transformation of the involved spreader(s) into stifler(s). 
This change is due to the loss of interest in transmitting the rumor, stemming from the realization that the other individual is already aware of it. 
Thus defined, \eqref{eq:transDK} means that if the process is in state $(i,j)$ at time $t$, then the probabilities that it jumps to states $(i-1,j+1)$, $(i,j-2)$ or $(i,j-1)$ at time $t+h$ are, respectively, $ijh + o(h)$, $\binom{j}{2}h + o(h)$ and $j(N+1-i-j)h + o(h)$, where $o(h)$ represents a function such that $\lim_{h\to 0}o(h)/h =0$.

Daley and Kendall~\cite[Section~7]{DK} also proposed a more sophisticated model, by assuming that, at each interaction, a spreader tells the rumor with probability \(p\) and that an eventual transformation of a spreader into a stifler occurs with probability \(\alpha\).
In Daley and Gani~\cite{DG}, this process is referred to as the \emph{\vapdk}.
A detailed description is presented in Section~\ref{SS: apdk} of the paper.

Another notable rumor model that has gained popularity, primarily for its simplicity, was formulated by Maki and Thompson~\cite{MT}.
In this model, the rumor is spread through \emph{directed contacts} between spreaders and other individuals, meaning there is a distinction between the initiator and the recipient. 
Moreover, when a spreader interacts with another spreader, only the initiating spreader becomes a stifler.
Hence, the continuous-time Markov chain \(\{(X(t), Y(t))\}_{t \geq 0}\) has the following increments and rates:
\begin{equation*}
\begin{array}{ccc}
\text{interaction} \qquad &\text{transition} \qquad &\text{rate}\\[0.2cm]
\text{spreader -- ignorant} \qquad &(-1, 1) \qquad &X Y,\\[0.2cm]
\text{spreader -- spreader/stifler} \qquad &(0, -1) \qquad &Y (N - X).
\end{array}
\end{equation*}
In this case, note that if the process is in state $(i,j)$ at time $t$, then the probabilities that it jumps to states $(i-1,j+1)$ or $(i,j-1)$ at time $t+h$ are, respectively, $ijh + o(h)$ and $j (n - i)h + o(h)$.

Much of the literature on stochastic rumor models focuses on analyzing the outcome of the process.
By using martingales, Sudbury~\cite{Sudbury} established a Law of Large Numbers for the \mt, stating that the proportion of individuals who ultimately have not heard the rumor converges in probability to \(0.203\) as \(N \to \infty\). 
The asymptotic normality of this proportion, when suitably rescaled, was proved by Watson~\cite{Watson} for the \mt and by Pittel~\cite{Pittel} for the \dk.
Lebensztayn~\cite{LDPMT} derived a closed formula for the probability mass function of the final number of ignorants in the \mt, along with a Large Deviations Principle for the corresponding proportion.
For a general stochastic rumor system that includes both Daley--Kendall and Maki--Thompson models, Pearce~\cite{Pearce} examined the probability generating functions through a block-matrix approach. 
Lebensztayn et al.~\cite{LTRM} also considered a general rumor model and demonstrated a Law of Large Numbers and a Central Limit Theorem for the outcome of the process; their model incorporates a new category of uninterested individuals.
In addition, limit theorems were proved by Lebensztayn et al.~\cite{RPRS}, for a generalization of the \mt allowing a spreader a random number of unsuccessful telling interactions.
Rada et al.~\cite{MRR} investigated the asymptotic behavior of a generalized \mt, assuming that each ignorant becomes a spreader only after hearing the rumor a predetermined number of times.
The interested reader is referred to Daley and Gani~\cite[Chapter~5]{DG} for a comprehensive overview of the theory of rumor models.

In this paper, we consider a stochastic rumor model that encompasses the basic Daley--Kendall and Maki--Thompson models, and their \((\alpha, p)\)-probability variants.
This generalized model, whose definition is given in Section~\ref{SS: GRM}, is characterized by a set of suitably chosen parameters that dictate the transition rates of the process.
This construction enables a quantitative framework for describing the behavioral mechanisms of the people involved in rumor spreading.
For this rumor model, we define
\[ \tabs = \inf \{t: Y^{(N)}(t) = 0\} \]
the random time when there are no spreaders left in the population. It is worth noting that the superscript $(N)$ is used whenever we want to emphasize the dependence on~$N$ of the random variables. Our main purpose is to study the asymptotic behavior as \(N \to \infty\) of the random variable
\begin{equation}
\label{F: PMax}
\pmax = \max_{0 \leq t \leq \tabs} \frac{Y^{(N)}(t)}{N},
\end{equation}
which is the maximum proportion of spreaders during the entire process.
We prove that \(\pmax\) converges almost surely as \(N \to \infty\) to a limiting constant depending on the model parameters.
To the best of our knowledge, limit theorems for the highest proportion of spreaders attained in rumor models have not been established previously. Our proofs, presented in Section \ref{S: Proofs}, rely on a suitable application of convergence results for density-dependent Markov chains. The main idea is to show that, as $N$ goes to infinity, the entire trajectories of a suitably coupled version of our model, rescaled by $N$, converge to the solution of a tractable system of differential equations. The key point is that, after applying a random time change, we obtain a new process that has the same transitions as the original one, so their corresponding maximum numbers of spreaders until absorption are the same. Moreover, the coupling between the two processes is constructed in such a way that the new process becomes a density-dependent Markov chain, allowing us to apply the theory of convergence for such chains.
For a thorough treatment of this theory, we refer the reader to Andersson and Britton \cite[Chapter~5]{AB} and Ethier and Kurtz \cite[Chapter~11]{MPCC}.

\section{Main results}
\label{S: Main results}

\subsection{The \((\texorpdfstring{\alpha}{alpha}, p)\) Daley--Kendall model}
\label{SS: apdk}

First, we consider the \emph{\vapdk}, which we shall abbreviate as \apdk.
Given \(\alpha \in (0, 1]\) and \(p \in (0, 1]\), assume that, independently for each pairwise meeting and each individual,
\begin{itemize}
\item[(a)] A spreader involved in a meeting chooses to transmit the rumor with probability~\(p\).

\item[(b)] Once this decision is made, any spreader in a meeting with someone aware of the rumor becomes a stifler with probability~\(\al\).

\end{itemize}
Thus, if $X(t)$ and $Y(t)$ denote the number of ignorants and spreaders at time $t$, then the \apdk is the continuous-time Markov chain \(\{(X(t), Y(t))\}_{t \geq 0}\) with possible increments and corresponding infinitesimal rates given by

\begin{equation}\label{F: Rates apdk}
{\allowdisplaybreaks
\begin{array}{ccc}
\text{interactions}\quad&\text{transition} \quad &\text{rate}\\[0.2cm]
\text{spreader -- ignorant}\quad&(-1, 1) \quad &p \, X Y,\\[0.2cm]
\text{spreader -- spreader}\quad&(0, -2) \quad &\al^2 p (2 - p) \displaystyle\binom{Y}{2},\\[0.4cm]
\text{spreader -- spreader/stifler}\quad&(0, -1) \quad &\al (1 - \al) p (2 - p) \, Y (Y - 1) + \al p \, Y (N + 1 - X - Y).
\end{array}}%
\end{equation}

\noindent
The first equation of \eqref{F: Rates apdk} represents the case in which a spreader meets an ignorant and chooses to transmit the rumor, which occurs with probability $p$; and as a result, the ignorant becomes a spreader. The second equation of \eqref{F: Rates apdk} corresponds to a meeting between two spreaders, where at least one of them transmits the information---an event that occurs with probability $p(2-p)$. After the transmission, both spreaders become stiflers, which happens with probability $\alpha^2$. Finally, the third equation of \eqref{F: Rates apdk} summarizes two possible interactions that result in a spreader changing category and becoming a stifler. The first case is when two spreaders meet, at least one of them chooses to transmit the information, and afterward, only one of them becomes a stifler; this last event occurs with probability $\alpha(1-\alpha)$. The second case is when a spreader meets a stifler, the spreader transmits the information, and then becomes a stifler; this event occurs with probability $\alpha p$.

\begin{theorem}
\label{T: apdk}
Consider the \apdk with \(\alpha, p \in (0, 1]\), and let \(\pmax\) be as in \eqref{F: PMax}. Then,
\[ \lim_{N \to \infty} \pmax = \lmax{DK}\ap \quad \alsu, \]
where
\[ \lmax{DK}\ap = 
\left\{
\begin{array}{cl}
\dfrac{1}{\alpha (1 - p)} \left[ (1 + \alpha) \left(\dfrac{\alpha (2 - p)}{1 + \alpha}\right)^{\frac{1}{1 - \alpha (1 - p)}} - \alpha \right] &\text{if } p < 1,\\[0.5cm]
1 - \alpha \, \log \left(1 + \dfrac{1}{\alpha}\right) &\text{if } p = 1.
\end{array}	\right. \]
\end{theorem}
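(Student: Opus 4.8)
The plan is to express \(\pmax\) as the maximum of the spreader coordinate of a rescaled trajectory that converges to the solution of an ODE. Because the value in \eqref{F: PMax} depends only on the sequence of states visited by \(\{(X(t),Y(t))\}_{t\ge0}\) and not on the holding times, I may apply any positive, state-dependent time change without altering it. The three rates in \eqref{F: Rates apdk} are of order \(N^2\) and share the common factor \(Y\); dividing every rate by the current number of spreaders produces a time-changed process \(\{(\tilde X(t),\tilde Y(t))\}_{t\ge0}\) with the same jump chain --- hence the same maximum of \(Y\) until absorption --- and with rates \(p\tilde X\), \(\tfrac12\alpha^2p(2-p)(\tilde Y-1)\), and \(\alpha(1-\alpha)p(2-p)(\tilde Y-1)+\alpha p(N+1-\tilde X-\tilde Y)\) attached to the increments \((-1,1)\), \((0,-2)\), \((0,-1)\). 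These are now of order \(N\) and of density-dependent form \(N\beta_\ell(\tilde X/N,\tilde Y/N)\) with affine \(\beta_\ell\). The decisive gain is that this particular time change removes the degeneracy of the starting point: with a single initial spreader one has \(Y(0)/N\to0\), and \((1,0)\) is a fixed point of the naively scaled dynamics, whereas after dividing by \(Y\) the drift at \((1,0)\) is nonzero, so the limiting ODE leaves the origin at once.

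Next I would apply the convergence theorem for density-dependent Markov chains, in the form of Ethier--Kurtz \cite[Chapter~11]{MPCC} or Andersson--Britton \cite[Chapter~5]{AB}. The scaled initial state satisfies \((\tilde X(0),\tilde Y(0))/N=(1,1/N)\to(1,0)\), and the \(\beta_\ell\) are affine, hence Lipschitz on the simplex \(\{x,y\ge0,\ x+y\le1\}\) in which the process lives; coupling all \(N\) through one family of unit-rate Poisson processes then gives, almost surely and uniformly on compact time intervals,
\begin{equation*}
\frac1N(\tilde X(\cdot),\tilde Y(\cdot))\longrightarrow (x(\cdot),y(\cdot)),\qquad \dot x=-p\,x,\quad \dot y=p\,x-\alpha p(1-x)-\alpha p(1-p)\,y,
\end{equation*}
with \((x(0),y(0))=(1,0)\). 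The sought constant is then \(\lmax{DK}\ap=\max_{t\ge0}y(t)\).

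To evaluate this maximum I would pass to the phase plane. As \(\dot x=-px<0\) while \(y>0\), I may view \(y\) as a function of \(x\in(0,1]\), and the system reduces to the linear first-order equation
\begin{equation*}
\frac{dy}{dx}-\frac{\alpha(1-p)}{x}\,y=\frac{\alpha}{x}-(1+\alpha),
\end{equation*}
which I solve using the integrating factor \(x^{-\alpha(1-p)}\) together with \(y=0\) at \(x=1\). The peak occurs where \(\dot y=0\), i.e.\ (for \(y>0\)) at \(y^{*}=[(1+\alpha)x^{*}-\alpha]/(\alpha(1-p))\); inserting the explicit solution yields \((x^{*})^{1-\alpha(1-p)}=\alpha(2-p)/(1+\alpha)\), and substituting this back reproduces precisely the formula for \(\lmax{DK}\ap\) when \(p<1\). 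The case \(p=1\) is the limiting situation \(\alpha(1-p)\to0\): there the integrating factor is trivial, \(y(x)=(1+\alpha)(1-x)+\alpha\log x\), the maximum is at \(x^{*}=\alpha/(1+\alpha)\), and its value is \(1-\alpha\log(1+1/\alpha)\), which for \(\alpha=p=1\) gives \(1-\log2\).

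The delicate step is converting trajectory convergence into convergence of \(\pmax\) over the true lifetime \([0,\tabs]\). Two issues arise. First, the time change is singular where \(Y=0\), so the density-dependent chain represents the original process faithfully only while \(\tilde Y\ge1\); I would therefore carry out the comparison up to a time \(T\) at which \(y\) has passed its peak and descended to a small level \(\delta>0\), ensuring that the maximum is attained strictly inside the region of validity, and bound the short remaining stretch up to absorption by a crude supermartingale or barrier estimate showing that \(\tilde Y/N\) cannot climb back near \(y^{*}\). Second, I must exclude early extinction of the rumor, which would force \(\pmax\to0\) on an event of nonvanishing probability; this is exactly where almost-sure uniform convergence is essential, since on the full-probability event it provides, \(\tilde Y(t)/N\) shadows \(y(t)\) all the way to \(y^{*}>0\) for every large \(N\), automatically ruling out early die-out. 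Granting these controls, the \(1\)-Lipschitz continuity of the supremum functional with respect to uniform convergence yields \(y^{*}-\eps\le\pmax\le y^{*}+\eps\) for all large \(N\), and hence the claimed almost-sure limit.
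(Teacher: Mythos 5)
Your proposal is, in its core, the same proof as the paper's: dividing all rates by the current number of spreaders is exactly the paper's random time change $\Theta^{(N)}(t)=\int_0^t Y^{(N)}(s)\,ds$ of Section~\ref{SS: Coupled SM}; your time-changed chain is the density-dependent family \eqref{F: Rates TCP} specialized via \eqref{F: Values apdk} ($\la=p$, $\ga=\al$, $\te=\al(1-p)$); your planar ODE is \eqref{F: SDE} for these parameters; and your integrating-factor solution is the paper's function $f$ in \eqref{F: Function f}. Your evaluation of the peak also checks out: the nullcline identity $y^{*}=[(1+\al)x^{*}-\al]/(\al(1-p))$ together with $(x^{*})^{1-\al(1-p)}=\al(2-p)/(1+\al)$ is an algebraic shortcut to the paper's $f(x_{\max})$ in Lemma~\ref{L: Max}, and the $p=1$ case matches; the only packaging difference is that the paper proves the general Theorem~\ref{T: gm} and gets Theorem~\ref{T: apdk} by substitution, while you work directly with the \apdk. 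The one genuinely different step is the endgame, i.e., converting uniform convergence on compacts into convergence of the maximum over the random interval $[0,\ttabs]$. You stop at a sub-peak level $\delta$ and invoke an unproved supermartingale/barrier bound for the stretch from level $\delta$ down to absorption; the paper avoids any such estimate by proving instead that $\ttabs\to\tf$ almost surely (Lemma~\ref{L: Conv tf}), which needs only the transversality $y'(\tf)<0$, and then combining the freezing identity \eqref{F: yabs} with rational times $k_i\searrow\tf$ and monotonicity of the running maximum. Your barrier step is feasible --- $\tilde X$ is non-increasing, and for $\delta$ small the process past the descent has $x<\al/(1+\al)$, where the drift of $y$ is uniformly negative on $\{y\ge0\}$ since $x_{\infty}<\al/(1+\al)$ --- but it is precisely the piece you left as a sketch, and the paper's absorption-time lemma is the cleaner, estimate-free substitute. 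Finally, your two flagged worries (that the density-dependent description is faithful only while $\tilde Y\ge1$, and early die-out of the rumor) are resolved in the paper at essentially the same level of rigor as in your sketch: both arguments ultimately lean on the almost-sure Poisson-coupling convergence behind Lemma~\ref{L: Conv v}, so on these points your proposal is no less complete than the published proof.
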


Both Daley and Kendall~\cite[Section~7]{DK} and Daley and Gani~\cite[Section~5.2]{DG} present the deterministic analysis of the \apdk.
Lebensztayn et al.~\cite[Example~2.8]{LTRM} examine the stochastic version, assuming also that an ignorant individual has a certain probability~\(q\) of becoming a stifler immediately upon hearing the rumor.
All these references deal with the asymptotic behavior of the final proportion of ignorants.
Recalling that the \dk is obtained by choosing $\al = p = 1$, we get the following result.\\

\begin{corollary}
\label{C: dk}
For the classical \dk,
\[ \lim_{N \to \infty} \pmax = 1 - \log 2 \approx 0.3069 \quad \alsu \]
\end{corollary}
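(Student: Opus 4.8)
The plan is to derive Corollary~\ref{C: dk} as an immediate specialization of Theorem~\ref{T: apdk}. First I would recall, as stated in the sentence preceding the corollary, that the classical \dk coincides with the \apdk in the case $\alpha = p = 1$. Since the pair $(\alpha, p) = (1, 1)$ lies in the admissible region $(0, 1] \times (0, 1]$, Theorem~\ref{T: apdk} applies without modification, and it gives that $\pmax$ converges almost surely, as $N \to \infty$, to the constant $\lmax{DK}\ap$ evaluated at $\alpha = p = 1$.

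It then remains only to compute this limiting value. Because $p = 1$, the relevant expression is the lower branch of the piecewise formula for $\lmax{DK}\ap$, that is, $1 - \alpha \log(1 + 1/\alpha)$. Substituting $\alpha = 1$ yields
\[ 1 - 1 \cdot \log\!\left(1 + \tfrac{1}{1}\right) = 1 - \log 2 \approx 0.3069, \]
which is the asserted value. I do not expect any genuine obstacle here: the full probabilistic and analytic content is already contained in Theorem~\ref{T: apdk}, so the proof collapses to a single substitution of $\alpha = p = 1$ into the $p = 1$ case of the constant.

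For my own reassurance --- though this is not logically required for the corollary --- I would also verify that the two branches of $\lmax{DK}\ap$ match at the boundary when $\alpha = 1$. Writing the upper branch at $\alpha = 1$ as $\frac{1}{1 - p}\bigl[\,2\bigl(\frac{2 - p}{2}\bigr)^{1/p} - 1\bigr]$ and letting $p \to 1^{-}$ produces an indeterminate $0/0$ form; a single application of L'Hôpital's rule in the variable $p$ resolves it to $1 - \log 2$, in agreement with the $p = 1$ branch. This would confirm that the constant in Theorem~\ref{T: apdk} is internally consistent, even though the corollary itself follows instantly from the $p = 1$ formula.
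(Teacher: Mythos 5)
Your proof is correct and is essentially identical to the paper's own argument: the paper presents the corollary as an immediate consequence of Theorem~\ref{T: apdk}, obtained by setting $\alpha = p = 1$ in the $p = 1$ branch of $\lmax{DK}\ap$, which gives $1 - \log 2$. Your supplementary L'H\^{o}pital check that the $p < 1$ branch with $\alpha = 1$ also tends to $1 - \log 2$ as $p \to 1^{-}$ is correct and confirms the continuity of the limiting constant, but it is extra reassurance rather than a needed step.
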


\subsection{The \((\texorpdfstring{\alpha}{alpha}, p)\) Maki--Thompson model}
\label{SS: apmt}

The definition of the \emph{\vapmt}, or \emph{\apmt} for short, runs along similar lines, but with the rumor being transmitted by directed contacts of spreaders, and, in rule (b), only the initiating spreader having probability~\(\al\) to become a stifler.
Hence, the Markov process \(\{(X(t), Y(t))\}_{t \geq 0}\) evolves according to the following transition rates:

\begin{equation}
\label{F: Rates apmt}
\begin{array}{ccc}
\text{interaction} \qquad &
\text{transition} \qquad &\text{rate}\\[0.2cm]
\text{spreader -- ignorant} \qquad &
(-1, 1) \qquad &p \, X Y,\\[0.2cm]
\text{spreader -- spreader/stifler} \qquad &(0, -1) \qquad &p \, \al \, Y (N - X).
\end{array}
\end{equation}

The first equation of \eqref{F: Rates apmt} represents the case in which a spreader contacts an ignorant and chooses to transmit the rumor, which occurs with probability $p$; and as a result, the ignorant becomes a spreader. The second equation of \eqref{F: Rates apmt} corresponds to the case in which a spreader contacts a non-ignorant and chooses to transmit the rumor---an event that occurs with probability $p$, and after the transmission, the contacting spreader becomes a stifler, which happens with probability $\alpha$.

\begin{theorem}
\label{T: apmt}
Consider the \apmt with any \(\alpha, p \in (0, 1]\), and let \(\pmax\) be as in \eqref{F: PMax}. Then,
\[ \lim_{N \to \infty} \pmax = \lmax{MT}(\alpha) \quad \alsu, \]
where
\[ \lmax{MT}(\alpha) = 1 - \alpha \, \log \left(1 + \frac{1}{\alpha}\right). \]
\end{theorem}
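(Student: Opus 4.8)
The plan is to follow the strategy announced in the Introduction: replace the original chain by a time-changed version that is density-dependent, apply the functional law of large numbers for such chains, and read off the peak from the limiting ODE.

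First I would record that the quantity in \eqref{F: PMax} is a functional of the embedded jump chain of $\{(X, Y)\}$ alone: the sequence of states visited before absorption, and in particular the largest value of $Y$ among them, does not depend on the holding times. Since both rates in \eqref{F: Rates apmt} carry the common factor $Y$, I would introduce the random time change that runs the process on a clock ticking at rate $Y$ (valid while $Y > 0$, i.e.\ up to $\tabs$). This produces a coupled process $\{(\tilde X^{(N)}, \tilde Y^{(N)})\}$ with the same jump chain --- hence $\tpmax = \pmax$, with $\ttabs$ the analogue of $\tabs$ --- but now with transition rates $p\,X$ for the jump $(-1,1)$ and $p\,\al\,(N - X)$ for the jump $(0,-1)$. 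Writing $x = X/N$, these equal $N\beta_{(-1,1)}(x)$ and $N\beta_{(0,-1)}(x)$ with $\beta_{(-1,1)}(x) = p\,x$ and $\beta_{(0,-1)}(x) = p\,\al\,(1 - x)$, so $\{(\tilde X^{(N)}, \tilde Y^{(N)})\}$ is a genuine density-dependent Markov chain in the sense of Ethier--Kurtz, whereas the original chain (with quadratic rates) is not.

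Next I would invoke the functional strong law for density-dependent chains (Andersson--Britton \cite[Chapter~5]{AB} or Ethier--Kurtz \cite[Chapter~11]{MPCC}). The drift is $F(x, y) = (-1, 1)\,\beta_{(-1,1)}(x) + (0, -1)\,\beta_{(0,-1)}(x) = \bigl(-px,\; px - p\al(1 - x)\bigr)$, and the rescaled initial state $(1, 1/N)$ converges to $(1, 0)$. As $F$ is Lipschitz, the system $x' = -px$, $y' = px - p\al(1 - x)$ with $x(0) = 1$, $y(0) = 0$ has a unique solution, and the theorem yields, for each fixed $T$,
\[ \sup_{0 \le t \le T}\left| \frac{\tilde Y^{(N)}(t)}{N} - y(t)\right| \longrightarrow 0 \quad \alsu \]
Solving gives $x(t) = e^{-pt}$ and $y(t) = (1 + \al)(1 - e^{-pt}) - p\al t$, which is strictly concave, vanishes at $t = 0$, and attains its unique maximum at $t^\star = p^{-1}\log(1 + 1/\al)$, where $x(t^\star) = \al/(1+\al)$ and a short computation gives $y(t^\star) = 1 - \al\log(1 + 1/\al) = \lmax{MT}(\al)$; afterwards $y$ decreases and returns to $0$ at some finite $\tf$.

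Finally I would transfer trajectory convergence into convergence of the maximum. The lower bound $\liminf_N \pmax \ge y(t^\star)$ is immediate: since $\tilde Y^{(N)}(t^\star)/N \to y(t^\star) > 0$, the process is not yet absorbed at $t^\star$ for large $N$, so $\pmax = \tpmax \ge \tilde Y^{(N)}(t^\star)/N \to y(t^\star)$. I expect the \emph{upper} bound to be the main obstacle, because the uniform convergence above holds only on fixed compact intervals, whereas $\pmax$ is a maximum over the \emph{random} interval $[0, \tabs]$, and near the end of the process $\tilde Y^{(N)}$ is of smaller order than $N$, a regime the fluid limit does not directly control. To handle this I would fix $\eps > 0$, choose $T_1 \in (t^\star, \tf)$ with $y(T_1) < \eps$, and use uniform convergence on $[0, T_1]$ (where $\sup y = y(t^\star)$) to bound the maximum there by $m^\star_{\mathrm{MT}}(\al) + \eps$. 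For the remaining segment $t \in [T_1, \ttabs]$ I would exploit that $\tilde X^{(N)}$ is non-increasing, so $\tilde X^{(N)}(t)/N$ has fallen below $\al/(1+\al) = x(t^\star)$, which makes the spreader drift $px - p\al(1 - x)$ strictly negative there; a supermartingale/Doob estimate then shows $\tilde Y^{(N)}/N$ cannot climb from its small value (near $\eps$) back above $m^\star_{\mathrm{MT}}(\al) + \eps$. Combining the two segments gives $\limsup_N \pmax \le m^\star_{\mathrm{MT}}(\al) + \eps$, and letting $\eps \downarrow 0$ completes the proof.
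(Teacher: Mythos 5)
Your core strategy is the same as the paper's. The paper proves the general Theorem~\ref{T: gm} (time change by the integral of $Y$, density dependence of the time-changed chain, Ethier--Kurtz Theorem~11.2.1, explicit solution of the limiting ODE) and obtains Theorem~\ref{T: apmt} by the substitution \eqref{F: Values apmt}, i.e.\ $\la = p$, $\tu = 0$, $\td = \ga = \al$, $\te = 0$; your time-changed rates $pX$ and $p\al(N-X)$, your drift, your solution $y(t) = (1+\al)(1-e^{-pt}) - p\al t$, and the peak value $1 - \al\log(1 + 1/\al)$ agree exactly with that specialization (you simply work with the Maki--Thompson case directly instead of through the general model). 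Where you genuinely diverge is the endgame, i.e.\ converting uniform convergence on compact time intervals into convergence of the maximum over the random interval $[0, \ttabs]$. The paper does this without any martingale argument: it first proves $\ttabs \to \tf$ almost surely (Lemma~\ref{L: Conv tf}), which is cheap precisely because the fluid limit $y$ is \emph{strictly negative} just after $\tf$, so uniform convergence on $[0,\tf+\eps]$ forces absorption before $\tf + \eps$; it then uses that $\tilde{y}^{(N)}$ is frozen after absorption (equation \eqref{F: yabs}), so that $\max_{[0,\ttabs]}\tilde{y}^{(N)} = \max_{[0,k]}\tilde{y}^{(N)}$ for any rational $k > \tf$ and large $N$, and lets $k \downarrow \tf$.

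Your alternative endgame (splitting at $T_1 \in (t^{\star}, \tf)$ and exploiting monotonicity of $\tilde X$ plus negative drift after $T_1$) is workable, but as sketched it has a gap for the claimed \emph{almost sure} convergence. Doob's maximal inequality for the nonnegative supermartingale $\tilde{Y}^{(N)}$ after $T_1$ only gives a bound of order $2\eps / (\lmax{MT}(\al) + \eps)$ for the probability of climbing from level $2\eps N$ to level $(\lmax{MT}(\al)+\eps)N$; this is a constant that neither tends to $0$ as $N \to \infty$ nor is summable, so it yields convergence in probability but not the a.s.\ statement. To repair it you need a quantitative estimate: after $T_1$ the embedded jump chain of $\tilde{Y}^{(N)}$ has up-step probability bounded by some $q < 1/2$ uniformly in time (by monotonicity of $\tilde X$), so a gambler's-ruin bound makes the climbing probability at most $e^{-cN}$, and Borel--Cantelli (which tolerates the dependence of the processes across $N$) then gives the a.s.\ conclusion. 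Alternatively, you can dispense with the martingale argument entirely by using the paper's observation: since $y(\tf + \eps) < 0$, absorption occurs before $\tf + \eps$ eventually, and then $\max_{[T_1, \ttabs]} \tilde{y}^{(N)} \le \max_{[T_1, \tf+\eps]} \tilde{y}^{(N)}$, which by uniform convergence on the fixed compact interval $[0, \tf+\eps]$ tends to $\max_{[T_1, \tf+\eps]} y < \eps$.
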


By making $\al = p = 1$, we recover the classical model.

\begin{corollary}
\label{C: mt}
For the classical \mt,
\[ \lim_{N \to \infty} \pmax = 1 - \log 2 \approx 0.3069 \quad \alsu \]
\end{corollary}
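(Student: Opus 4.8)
The plan is to obtain Corollary~\ref{C: mt} as an immediate specialization of Theorem~\ref{T: apmt}. First I would check that the classical \mt is exactly the \apmt with $\al = p = 1$: substituting these values into the rate table \eqref{F: Rates apmt} turns the spreader--ignorant rate $p\,XY$ into $XY$ and the spreader--spreader/stifler rate $p\,\al\,Y(N - X)$ into $Y(N - X)$, which are precisely the transition rates that define the classical model. Thus the two continuous-time Markov chains coincide for this choice of parameters, and in particular the random variables $\pmax$ defined from them via \eqref{F: PMax} agree.

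Having made this identification, I would apply Theorem~\ref{T: apmt} with $\al = p = 1$, which yields $\lim_{N \to \infty} \pmax = \lmax{MT}(1)$ almost surely. It then remains only to evaluate the closed form at $\al = 1$, namely $\lmax{MT}(1) = 1 - 1 \cdot \log\!\left(1 + \tfrac{1}{1}\right) = 1 - \log 2 \approx 0.3069$, which completes the argument. Since every substantive ingredient---the random time change, the coupling, and the convergence of the resulting density-dependent chain to its limiting ODE---is already established in the proof of Theorem~\ref{T: apmt}, there is no genuine obstacle to overcome here. One may also note that $\lmax{MT}(\al)$ does not depend on $p$, so the transmission probability plays no role in the limiting peak; the only point deserving a moment's care is the rate matching at $\al = p = 1$, which is transparent from inspection of \eqref{F: Rates apmt}.
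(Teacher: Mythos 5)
Your proposal is correct and matches the paper's own argument: the corollary is obtained exactly by noting that $\al = p = 1$ in \eqref{F: Rates apmt} recovers the classical \mt rates and then evaluating $\lmax{MT}(1) = 1 - \log 2$ via Theorem~\ref{T: apmt}.
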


\subsection{Discussion}
\label{SS: Discussion}

\begin{figure}[ht]
\centering
\includegraphics[scale=0.9]{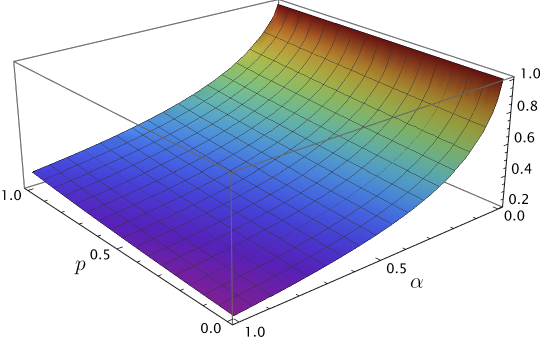}
\caption{Limiting proportion of the rumor peak for the \apdk.}
\label{Fig: apdk}
\end{figure}

\begin{figure}[ht]
\centering
\includegraphics[scale=0.9]{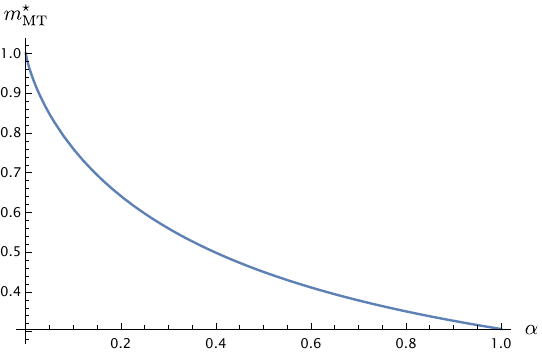}
\caption{Limiting proportion of the rumor peak for the \apmt.}
\label{Fig: apmt}
\end{figure}

Figures \ref{Fig: apdk} and \ref{Fig: apmt} show the graphs of \(\lmax{DK}\ap\) and \(\lmax{MT}(\alpha)\), respectively. Notice that, for the \apdk with any \(p\), the asymptotic rumor peak \(\lmax{DK}\) is a decreasing function of~\(\alpha\), which makes intuitive sense.
In addition,
\begin{equation}
\label{F: Lim Hayes}
\inf_{\ap \in (0, 1]^2} \lmax{DK}\ap = \lim_{p \to 0^{+}} \lmax{DK}(1, p) = \frac{2}{\sqrt{e}} - 1 \approx 0.2131,
\end{equation}
and for every \(p \in (0, 1]\),
\[ \sup_{\ap \in (0, 1]^2} \lmax{DK}\ap = \lim_{\alpha \to 0^{+}} \lmax{DK}\ap = 1. \]
We observe that the value in~\eqref{F: Lim Hayes} obtained for the \apdk with \(\alpha = 1\) and \(p \to 0^{+}\) corresponds to the asymptotic peak for a version of the \mt in which whenever a spreader interacts with another spreader, both become stiflers.
This model was formulated in Hayes~\cite{Hayes} by mistake when trying to simulate the basic \dk.

Recall that in \mt the contacts between individuals are directed, that is, the initiator and the recipient of the communication are distinguished.
On the other hand, in \dk the interactions are pairwise.
Thus, considering the \(\ap\) versions of these models, we see that in~\eqref{F: Rates apmt} the probability~\(p\) plays the role of a time constant, while in~\eqref{F: Rates apdk} it does not.
Consequently, in the case of the \apmt, the limiting maximum proportion \(\lmax{MT}\) does not depend on~\(p\).
Of course, \(\lmax{MT}\) is a decreasing function of~\(\alpha\). For the classical models, obtained with the choice \(\al = p = 1\), the asymptotic rumor peak values match.
This property is reminiscent of what occurs with the final proportions of ignorants remaining in the population (for both classic models, they are equal to~\(0.2032\)).

\subsection{A general rumor model}
\label{SS: GRM}

To prove the previous results, we establish a limit theorem for the random variable \(\pmax\) in a stochastic rumor model that includes as particular cases the \apdk and the \apmt.
This \emph{general rumor model} is described in terms of parameters so that a meeting between a spreader and an ignorant does not always produce a new spreader, and an encounter between two spreaders yields a random number of stiflers.
With the notation introduced above, i.e. $X(t)$ and $Y(t)$ denoting the number of ignorants and spreaders at time $t$, respectively, we consider the general rumor model defined by the continuous-time Markov chain \(\{(X(t), Y(t))\}_{t \geq 0}\)  with initial state \((N, 1)\) and
\begin{equation*}
{\allowdisplaybreaks
\begin{array}{ccc}
\text{interactions} \quad &\text{transition} \quad &\text{rate}\\[0.2cm]
\text{spreader -- ignorant}\quad &(-1, 1) \quad &\la \, X Y,\\[0.2cm]
\text{spreader -- spreader}\quad&(0, -2) \quad &\la \, \tu \displaystyle\binom{Y}{2},\\[0.4cm]
\text{spreader -- spreader/stifler}\quad&(0, -1) \quad &\la \, \td \, Y (Y - 1) + \la \, \ga \, Y (N + 1 - X - Y).
\end{array}}%
\end{equation*}
The parameter \(\la\) represents a meeting rate, and the first transition corresponds to the transformation of an ignorant individual into a spreader after hearing the rumor.
The second case indicates the transition in which two spreaders meet and both become stiflers.
The third transition occurs in an interaction spreader--spreader or spreader--stifler when only one spreader turns into a stifler.
The general rumor model was proposed by Lebensztayn et al.~\cite{LTRM}, who proved a Law of Large Numbers and a Central Limit Theorem for the fractions of individuals in the different classes of the population at the end of the process.

Let $\te = \tu + \td - \ga$.
We assume henceforth that
\(\la, \ga \in (0, \infty)\), \(\tu, \td \in [0, \infty)\), and \(\te \in [0, 1]\).
Notice that the \apdk is obtained by making
\begin{equation}
\label{F: Values apdk}
\la = p, \, \tu = \al^2 (2 - p), \, \td = \al (1 - \al) (2 - p), \, \ga = \al \, \text{ and } \, \te = \al (1 - p),
\end{equation}
whereas for the \apmt,
\begin{equation}
\label{F: Values apmt}
\la = p, \, \tu = 0, \, \td = \ga = \al \, \text{ and } \, \te = 0.
\end{equation}


\begin{theorem}
\label{T: gm}
Consider the general rumor model with parameters $\la, \ga \in (0, \infty)$, $\tu, \td \in [0, \infty)$, $\te \in [0, 1]$, and let \(\pmax\) be defined as in \eqref{F: PMax}. Then,
\[ \lim_{N \to \infty} \pmax = \lmax{GM}\gt \quad \alsu, \]
where
\[ \lmax{GM}\gt =
\left\{
\begin{array}{cl}
1 - \gamma \, \log \left(1 + \dfrac{1}{\gamma}\right) &\text{if } \te = 0,\\[0.5cm]
\dfrac{1}{\theta \, (1 - \theta)} \left[ (\theta + \gamma) \left(\dfrac{\theta + \gamma}{1 + \gamma}\right)^{\frac{\theta}{1 - \theta}} - \theta \, (1 + \gamma) \left(\dfrac{\theta + \gamma}{1 + \gamma}\right)^{\frac{1}{1 - \theta}} \right] - \dfrac{\gamma}{\theta} &\text{if } 0 < \te < 1,\\[0.6cm]
(1 + \gamma) \, \exp\left(-\dfrac{1}{1 + \gamma}\right) - \gamma &\text{if } \te = 1.
\end{array}	\right. \]
\end{theorem}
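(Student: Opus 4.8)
The plan is to realize \(\pmax\) as the running maximum of a density-dependent Markov chain whose fluid limit is an explicitly solvable planar system, and then to read off \(\lmax{GM}\gt\) from a phase-plane analysis. First I would apply a time change to the chain \(\{(X,Y)\}\). All three rates in the definition of the general rumor model are quadratic in \((X,Y,N)\), hence of order \(N^2\) in the bulk; rescaling the clock so as to divide every rate by \(\la N\) yields a process whose transitions take the canonical density-dependent form—jump \(\ell\) at rate \(N\beta_\ell(X/N,Y/N)\)—with
\[ \beta_{(-1,1)}(x,y)=xy,\qquad \beta_{(0,-2)}(x,y)=\tfrac12\,\tu\,y^2,\qquad \beta_{(0,-1)}(x,y)=\td\,y^2+\ga\,y\,(1-x-y). \]
Crucially, a time change only reparametrizes the clock and leaves the embedded jump chain—and hence the entire sequence of values taken by \(Y\)—unchanged, so the maximum number of spreaders before absorption coincides for the original and time-changed processes; it therefore suffices to analyze the latter. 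The classical convergence theorem for density-dependent chains (Ethier--Kurtz \cite[Chapter~11]{MPCC}, Andersson--Britton \cite[Chapter~5]{AB}) then gives, on a suitably coupled probability space, the almost sure uniform-on-compacts convergence of \((X/N,Y/N)\) to the solution \((\bar x,\bar y)\) of
\[ \dot{\bar x}=-\bar x\,\bar y,\qquad \dot{\bar y}=(1+\ga)\,\bar x\,\bar y-\te\,\bar y^2-\ga\,\bar y, \]
where \(\te=\tu+\td-\ga\).

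Next I would extract the peak from the fluid system. Wherever \(\bar y>0\) I treat \(\bar y\) as a function of \(\bar x\) and divide the two equations to obtain the linear ODE
\[ \frac{d\bar y}{d\bar x}=-(1+\ga)+\frac{\ga+\te\,\bar y}{\bar x}. \]
Integrating with integrating factor \(\bar x^{-\te}\) and the boundary condition \(\bar y=0\) at \(\bar x=1\) (the trajectory leaves the initial state \((1,0)\)) gives \(\bar y\) explicitly as a function of \(\bar x\). The maximum of \(\bar y\) is attained at the interior critical point, where \(d\bar y/d\bar x=0\), i.e. \((1+\ga)\bar x=\ga+\te\,\bar y\); solving yields \(\bar x_{\max}=\bigl((\te+\ga)/(1+\ga)\bigr)^{1/(1-\te)}\) when \(0<\te<1\), with the expected logarithmic (\(\te=0\)) and exponential (\(\te=1\)) degenerations. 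Substituting \(\bar x_{\max}\) back into the formula for \(\bar y(\bar x)\) produces precisely the three-case expression for \(\lmax{GM}\gt\); this computation is routine once the linear ODE is solved.

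The delicate point, and what I expect to be the \emph{main obstacle}, is the passage from the stochastic running maximum to \(\max_t\bar y(t)\). The natural initial density is \((X/N,Y/N)=(1,1/N)\to(1,0)\), but \((1,0)\) is a fixed point of the fluid system, so applying the convergence theorem from time \(0\) only yields the constant trajectory. I would handle this in the usual way: fix a small \(\eps>0\), run the chain until \(Y\) first reaches \(\lceil\eps N\rceil\)—which happens with probability tending to one, since while \(Y\ll N\) the spreader--ignorant rate dominates and drives \(Y\) upward—and start the fluid comparison there, where \(X/N\) is within \(O(\eps)\) of \(1\). By continuous dependence on initial data, the corresponding trajectory shadows the unstable manifold through \((1,0)\) and its peak differs from \(\lmax{GM}\gt\) by \(o(1)\) as \(\eps\to0\). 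The lower bound \(\liminf_N\pmax\ge\lmax{GM}\gt\) then follows by evaluating the almost surely convergent rescaled process near the deterministic peak time. For the matching upper bound I would exploit that \(X\) is nonincreasing: once \(X/N\) falls below \(\ga/(1+\ga)\)—which occurs after the peak—the drift of \(Y\) is strictly negative, so \(Y\) is thereafter subcritical and, with probability tending to one, admits no further macroscopic excursion; hence the global maximum cannot exceed \(\lmax{GM}\gt+o(1)\). Letting \(\eps\to0\) closes both inequalities, and the almost sure (rather than in-probability) statement is obtained from the coupled Poisson representation underlying the density-dependent limit. The crux is thus to control the dynamics near the degenerate initial fixed point and to rule out late-time excursions of \(Y\), so that the running maximum converges uniformly over the entire life of the (time-changed) process rather than over a fixed compact interval.
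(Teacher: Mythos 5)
Your proposal follows the same skeleton as the paper---time-change to a density-dependent chain, solve the fluid limit in the phase plane, transfer the running maximum---and your phase-plane computations (the orbit ODE \(d\bar y/d\bar x=-(1+\ga)+(\ga+\te\bar y)/\bar x\), the critical point \(\bigl((\te+\ga)/(1+\ga)\bigr)^{1/(1-\te)}\), the three-case peak value) coincide exactly with the paper's function \(f\) in \eqref{F: Function f}. But the two proofs diverge at the crucial technical device. You rescale time \emph{deterministically} by \(\la N\), which keeps the quadratic rates, so that \((1,0)\) is a degenerate equilibrium of your fluid system; everything you then flag as ``the main obstacle'' (take-off from one spreader, restart at \(Y=\lceil\eps N\rceil\), shadowing the unstable manifold, ruling out late excursions, letting \(\eps\to 0\)) is the price of that choice. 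The paper instead uses the \emph{random} time change \(\Theta^{(N)}(t)=\int_0^t Y^{(N)}(s)\,ds\), i.e.\ it divides the rates by the current number of spreaders rather than by \(\la N\). The resulting chain \eqref{F: Rates TCP} is density dependent with affine rates, its drift is \((-\la x,\ \la(1+\ga)x-\la\te y-\la\ga)\), and \((1,0)\) is no longer an equilibrium (\(y'(0)=\la>0\)). Hence Theorem~11.2.1 of \cite{MPCC} applies directly from time zero with initial density \((1,1/N)\to(1,0)\), the limiting trajectory crosses \(y=0\) transversally at the finite time \(\tf\), the absorption times satisfy \(\ttabs\to\tf\) almost surely, and the running maximum is transferred by a short monotonicity argument over rational horizons \(k>\tf\). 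The entire \(\eps\)-restart apparatus is thereby avoided.

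The comparison matters because, as sketched, your workaround delivers convergence in probability, not the almost-sure statement. When \(\tu>0\), the rumor dies before taking off with probability of order \(1/N\) (for instance, the two initial spreaders annihilate at the second jump), and \(\sum_N 1/N=\infty\), so Borel--Cantelli cannot upgrade ``with probability tending to one'' to ``a.s.\ for all large \(N\)''; on the early-extinction event \(\pmax=O(1/N)\), far from \(\lmax{GM}\gt\). An almost-sure claim is therefore meaningful only under a specific coupling of all the \(N\)-chains and must be proved from the structure of that coupling---for example, by showing that under the shared-Poisson-process construction the death intensity accumulated before \(Y\) reaches \(\log N\) is \(O(\log^2 N/N)\), so that a.s.\ only finitely many \(N\) go extinct early---together with a version of the fluid-limit theorem started from an \(N\)-dependent random restart time and state. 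Your single sentence invoking ``the coupled Poisson representation'' names the right object but not the argument; this, plus the verification that the number of stiflers is \(o(N)\) at the restart (so that \(X/N\to 1-\eps\)), and a Gronwall bound for the orbit ODE on \(x\in[\delta,1]\) in place of generic ``continuous dependence'' near the non-hyperbolic point, is the real content a proof along your route requires. All of it is doable, but it is precisely the work that the paper's random time change eliminates.
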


We remark that \(\lmax{GM}\gt\) is a continuous function for \(\gt \in (0, \infty) \times [0, 1]\).

\section{Proofs}
\label{S: Proofs}

First, we observe that Theorems \ref{T: apdk} and \ref{T: apmt} are direct consequences of Theorem~\ref{T: gm}, by replacing in \(\lmax{GM}\) the appropriate values of \(\gamma\) and \(\theta\) that are given in equations~\eqref{F: Values apdk} and \eqref{F: Values apmt}, respectively.

To prove Theorem~\ref{T: gm}, we consider the general rumor model and define
\[ V^{(N)}(t) = (X^{(N)}(t), Y^{(N)}(t)), t \geq 0.\]
Here are the main ideas of the proof.
Through a random time change, we define a new process $\{\tilde{V}^{(N)}(t)\}_{t \geq 0}$ that has the same transitions as $\{V^{(N)}(t)\}_{t \geq 0}$, whence their corresponding maximum values of spreaders until absorption coincide.
The coupling between $\{V^{(N)}(t)\}_{t \geq 0}$ and $\{\tilde{V}^{(N)}(t)\}_{t \geq 0}$ is done in a way that the latter process is a density-dependent Markov chain, for which the theory presented in Ethier and Kurtz~\cite[Chapter~11]{MPCC} can be used.
Thus, $\{\tilde{V}^{(N)}(t)\}_{t \geq 0}$ suitably rescaled converges almost surely to the solution of a system of differential equations, uniformly on bounded time intervals.
By studying this solution, we show that $\lmax{GM}$ is the maximum proportion of spreaders of the limiting deterministic trajectory.
Finally, by exploring the convergence of the coupled model to the solution of the dynamical system, we prove that the maximum proportion of spreaders in $\{\tilde{V}^{(N)}(t)\}_{t \geq 0}$ converges almost surely to $\lmax{GM}$.

\subsection{Coupled density-dependent stochastic model}
\label{SS: Coupled SM}

For a realization of the general rumor model $\{V^{(N)}(t)\}_{t \geq 0}$, we define
\begin{align*}
\Theta^{(N)}(t) &= \int_0^t Y^{(N)}(s) \, ds \text{ for } 0 \leq t \leq \tabs,\\
\Upsilon^{(N)}(s) &= \inf \{t: \Theta^{(N)}(t) > s\} \text{ for } 0 \leq s \leq \int_0^{\infty} Y^{(N)}(u) \, du,
\end{align*}
and let $ \tilde{V}^{(N)}(t) = V^{(N)}(\Upsilon^{(N)}(t)) $.
Then the time-changed process $ \{\tilde{V}^{(N)}(t)\}_{t \geq 0} $ has the same transitions as $ \{V^{(N)}(t)\}_{t \geq 0} $.

We consider
\begin{equation*}
\ttabs = \inf \{t: \tilde{Y}^{(N)}(t) = 0\}
\end{equation*}
the corresponding absorption time, and let
\[ \tilde{v}^{(N)}(t) = \dfrac{\tilde{V}^{(N)}(t)}{N} = (\tilde x^{(N)}(t), \tilde y^{(N)}(t)), t \geq 0, \]
be the rescaled time-changed process.
Notice that
\[ \tpmax = \max_{0 \leq t \leq \ttabs} {\tilde{y}^{(N)}(t)}  = \max_{0 \leq t \leq \ttabs} \frac{\tilde{Y}^{(N)}(t)}{N} \]
satisfies
\begin{equation}
\label{F: Max}
\tpmax = \pmax.
\end{equation}

We underline that $ \{\tilde{v}^{(N)}(t)\}_{t \geq 0} $ is a \emph{density-dependent Markov family}.
Indeed, note that $ \{\tilde{V}^{(N)}(t)\}_{t \geq 0} $ is a continuous-time Markov chain with state space $\bbZ^2$, and transition rates given by
{\allowdisplaybreaks
\begin{equation}
\label{F: Rates TCP}
\begin{array}{cc}
\text{transition} \quad &\text{rate}\\[0.1cm]
\ell_0 = (-1, 1) \quad &\la \, \tilde X,\\[0.2cm]
\ell_1 = (0, -2) \quad &\la \, \tu \displaystyle\frac{\tilde Y - 1}{2},\\[0.28cm]
\ell_2 = (0, -1) \quad &\la \, \td \, (\tilde Y - 1) + \la \, \ga \, (N + 1 - \tilde X - \tilde Y).
\end{array}
\end{equation}}%
Now defining
\begin{align*}
\beta_{\ell_0} (x, y) &= \la \, x,\\[0.1cm]
\beta_{\ell_1} (x, y) &= \la \, \tu \, \frac{y}{2},\\[0.1cm]
\beta_{\ell_2} (x, y) &= \la \, \td \, y + \la \, \ga \, (1 - x - y),
\end{align*}
we see that the transition intensities in~\eqref{F: Rates TCP} can be expressed as
\[ N \left[ \beta_{\ell_i} \left( \dfrac{\tilde X}{N}, \dfrac{\tilde Y}{N} \right) + O \left( \dfrac{1}{N} \right) \right]. \]
Therefore, $ \{\tilde{v}^{(N)}(t)\}_{t \geq 0} $ is a density-dependent population process with possible transitions in the set $\{\ell_0, \ell_1, \ell_2\}$.

\subsection{Limiting dynamical system}
\label{SS: Dynamical system}

We now apply Theorem~11.2.1 from Ethier and Kurtz~\cite{MPCC} to show that the time-changed model converges almost surely as $N \to \infty$, uniformly on compact time intervals.
Here, the \emph{drift function} is given by
\[ F(x, y) = \sum_{i=0}^{2} \ell_i \, \beta_{\ell_i} (x, y) 
= (- \la x, \la (1 + \ga) x - \la \te y - \la \ga). \]
Consequently, the corresponding deterministic system is defined by the following set of differential equations:
\begin{equation}
\label{F: SDE}
\begin{cases}
x^{\prime}(t) = - \la \, x(t),\\[0.1cm]
y^{\prime}(t) = \la (1 + \ga) \, x(t) - \la \te \, y(t) - \la \ga,
\end{cases}
\end{equation}
with initial state $(x(0), y(0)) = (1, 0)$.

To provide an explicit formula for the solution, we define the function $f: (0, 1] \rightarrow \bbR$ given by
\begin{equation}
\label{F: Function f}
f(x) = 
\left\{
\begin{array}{cl}
(1 + \ga) (1 - x) + \ga \log x &\text{ if } \, \te = 0,\\[0.2cm]
\dfrac{(\ga + \te) x^{\te} - (1 + \ga) \te x - \ga (1 - \te)}{\te (1 - \te)} &\text{ if } \, 0 < \te < 1,\\[0.3cm]
- \ga (1 - x) - (1 + \ga) \, x \log x &\text{ if } \, \te = 1.
\end{array}	\right.
\end{equation}
Then, the solution of system~\eqref{F: SDE} is $v(t) = (x(t), y(t))$, where
\begin{equation}
\label{F: Sol}
x(t) = e^{-\la t}
\quad \text{and} \quad
y(t) = f(x(t)).
\end{equation}
As stated in Theorem~11.2.1 of Ethier and Kurtz~\cite{MPCC},

\begin{lemma}
\label{L: Conv v}
For every $T > 0$, we have that
\[ \lim_{N \to \infty} \sup_{t \leq T} \, \bigl|\tilde{v}^{(N)}(t) - v(t)\bigr| = 0 \quad \alsu \]
\end{lemma}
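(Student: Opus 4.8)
The plan is to verify that the rescaled time-changed family $\{\tilde{v}^{(N)}(t)\}_{t \geq 0}$ meets the hypotheses of Theorem~11.2.1 of \cite{MPCC} and then to invoke that theorem directly. The density-dependent structure is already in place from Section~\ref{SS: Coupled SM}: the jumps lie in the finite set $\{\ell_0, \ell_1, \ell_2\}$, the associated density functions are the affine maps $\beta_{\ell_0}, \beta_{\ell_1}, \beta_{\ell_2}$, and their combination gives the drift $F(x,y) = \sum_{i=0}^{2} \ell_i \, \beta_{\ell_i}(x,y)$ that defines the system~\eqref{F: SDE}. As a preliminary I would confirm by direct substitution that the pair in~\eqref{F: Sol} solves~\eqref{F: SDE} with $(x(0), y(0)) = (1, 0)$: the first equation gives $x(t) = e^{-\la t}$ at once, while the second is a linear first-order equation for $y$ which, after taking $x = e^{-\la t}$ as the independent variable and using the integrating factor $x^{-\te}$, integrates to $y = f(x)$ with $f$ as in~\eqref{F: Function f}, the boundary cases $\te = 0$ and $\te = 1$ arising as limits of the generic expression.

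I would then check the three requirements of Theorem~11.2.1 in turn. First, the initial data converge: since the chain starts at $(N, 1)$, we have $\tilde{v}^{(N)}(0) = (1, 1/N) \to (1, 0) = v(0)$. Second, the drift $F(x, y) = (-\la x, \, \la(1 + \ga) x - \la \te y - \la \ga)$ is affine and therefore globally Lipschitz; in particular it is Lipschitz on any bounded neighborhood of the trajectory $\{v(t) : 0 \leq t \leq T\}$, which is compact because $x(t) = e^{-\la t} \in (0, 1]$ keeps $v$ bounded on $[0, T]$. Global Lipschitzness also ensures that the solution of~\eqref{F: SDE} is unique and exists for all $t$, so there is no finite-time blow-up or exit from the domain to worry about. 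Third, the summability condition over compact sets $K$, namely $\sum_{i} |\ell_i| \sup_K \beta_{\ell_i} < \infty$, is immediate, since there are only three jumps and each $\beta_{\ell_i}$ is continuous.

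The only point that requires genuine care is that the exact intensities in~\eqref{F: Rates TCP} are not $N \beta_{\ell_i}(\tilde{X}/N, \tilde{Y}/N)$ but $N[\beta_{\ell_i}(\tilde{X}/N, \tilde{Y}/N) + O(1/N)]$, with the correction uniform in the state. In the Poisson representation underlying Theorem~11.2.1, this perturbation contributes an additional term of size $O(1/N)$ to the integrated drift, uniformly on $[0, T]$; together with the strong law of large numbers for the driving Poisson processes and the Lipschitz bound on $F$, it is absorbed into the Gronwall estimate and disappears in the limit. I expect this reconciliation of the exact rates with the idealized density-dependent form to be the main (and essentially the only) obstacle. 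Once it is handled, Theorem~11.2.1 of \cite{MPCC} applies and gives, for every $T > 0$, that $\sup_{t \leq T} |\tilde{v}^{(N)}(t) - v(t)| \to 0$ almost surely as $N \to \infty$, which is the claim of the lemma.
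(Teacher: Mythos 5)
Your proposal is correct and follows essentially the same route as the paper, which simply invokes Theorem~11.2.1 of Ethier and Kurtz after establishing the density-dependent structure and the drift $F$ in Sections~\ref{SS: Coupled SM} and~\ref{SS: Dynamical system}. In fact you supply more detail than the paper does (convergence of initial conditions, Lipschitz drift, the summability condition, and the absorption of the $O(1/N)$ rate correction into the Gronwall estimate), all of which the paper leaves implicit in its citation of that theorem.
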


\subsection{Analysis of the deterministic system and convergence of the absorption times}
\label{SS: Analysis}

We begin this section with an analysis of the trajectory of the dynamical system solution.

\begin{lemma}
\label{L: Max}
For every $\ga > 0$ and $0 \leq \te \leq 1$, the function $f$ has a unique zero $\xf$ in the interval $(0, 1)$.
Now define
\begin{equation}
\label{F: tf}
\tf = -\frac{1}{\la} \, \log \xf.
\end{equation}
Then,
\begin{equation}
\label{F: Max y}
\max_{0 \leq t \leq \tf} y(t) = \lmax{GM}\gt.
\end{equation}
\end{lemma}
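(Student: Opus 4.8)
The plan is to reduce the maximization over time to a one-variable maximization of $f$, and then to locate and evaluate the peak of $f$. Since $x(t) = e^{-\la t}$ is a strictly decreasing bijection of $[0, \tf]$ onto $[\xf, 1]$ (indeed $x(0) = 1$ and $x(\tf) = e^{-\la \tf} = \xf$ by the definition of $\tf$), and $y(t) = f(x(t))$ by \eqref{F: Sol}, we immediately get
\[ \max_{0 \leq t \leq \tf} y(t) = \max_{\xf \leq x \leq 1} f(x). \]
A direct substitution shows that $f(1) = 0$ in each of the three regimes for $\te$, and $f(\xf) = 0$ by the definition of $\xf$, so the maximum over $[\xf, 1]$ must be attained at an interior point.

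Next I would establish that $f$ is unimodal on $(0,1)$ — strictly increasing, then strictly decreasing — by differentiating case by case. One finds $f'(x) = -(1+\ga) + \ga/x$ when $\te = 0$, $f'(x) = \bigl[(\ga+\te)\,x^{\te-1} - (1+\ga)\bigr]/(1-\te)$ when $0 < \te < 1$, and $f'(x) = -1 - (1+\ga)\log x$ when $\te = 1$. In every case $f'$ is positive as $x \to 0^{+}$ and negative at $x = 1$, with a single sign change from $+$ to $-$ at a unique critical point $x^{\star}$, namely $x^{\star} = \ga/(1+\ga)$, $x^{\star} = \bigl((\ga+\te)/(1+\ga)\bigr)^{1/(1-\te)}$, and $x^{\star} = \exp(-1/(1+\ga))$, respectively. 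Together with the boundary behaviour $f(0^{+}) < 0$ (equal to $-\infty$, $-\ga/\te$, or $-\ga$ in the three regimes) and $f(1) = 0$, unimodality forces $f$ to vanish exactly once on $(0,1)$, at a point $\xf$ lying strictly to the left of the peak; this proves both the existence and uniqueness of the zero. In particular $\xf < x^{\star} < 1$, and since $f(x^{\star}) > f(1) = 0$, the maximum of $f$ on $[\xf, 1]$ equals $f(x^{\star})$.

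It then remains to compute $f(x^{\star})$ and identify it with $\lmax{GM}\gt$. I expect the main obstacle to be the algebra in the regime $0 < \te < 1$, where both $x^{\star}$ and the stated target expression involve the awkward exponents $\te/(1-\te)$ and $1/(1-\te)$. The decisive simplification is the critical-point identity $(\ga+\te)\,(x^{\star})^{\te} = (1+\ga)\,x^{\star}$, which is exactly $f'(x^{\star}) = 0$; substituting it into the numerator of $f(x^{\star})$ collapses the expression to $f(x^{\star}) = \bigl[(1+\ga)\,x^{\star} - \ga\bigr]/\te$, and inserting $x^{\star} = \bigl((\ga+\te)/(1+\ga)\bigr)^{1/(1-\te)}$ reproduces the middle branch of $\lmax{GM}\gt$. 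The two boundary regimes are handled by the same strategy with elementary calculus, yielding $1 - \ga\,\log(1 + 1/\ga)$ when $\te = 0$ and $(1+\ga)\exp(-1/(1+\ga)) - \ga$ when $\te = 1$, which matches \eqref{F: Max y} and completes the proof.
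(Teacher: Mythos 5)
Your proposal is correct and follows essentially the same route as the paper's proof: reduce the time maximization to maximizing $f$ on an interval via the bijection $t \mapsto x(t)$, establish unimodality of $f$ case by case from $f'$, deduce the unique zero $\xf$ to the left of the critical point, and evaluate $f$ at the critical point to obtain $\lmax{GM}\gt$. The only difference is one of detail: you spell out the critical-point identity $(\ga+\te)(x^{\star})^{\te} = (1+\ga)x^{\star}$ and the resulting algebraic simplification, which the paper leaves implicit in the assertion $f(x_{\max}\gt) = \lmax{GM}\gt$.
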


\begin{proof}
For \(0 < \te < 1\), we have that $\lim_{x \to 0^{+}} f(x) < 0$, $f(1) = 0$, and the derivative of $f$ with respect to $x$ is given by
\[ f^{\prime}(x) = \dfrac{(\gamma + \theta) \, x^{\theta - 1} - (1 + \gamma)}{1 - \theta}. \]
Hence, $f$ is a unimodal function, with the global maximum at the point
\[ x_{\max}\gt = \left(\frac{\ga + \te}{1 + \ga}\right)^{\frac{1}{1 - \te}} \in (0, 1). \]
Therefore, the function $f$ has a unique zero $\xf$ in the interval $(0, 1)$.
With~\eqref{F: Sol} in mind, we conclude that $t \mapsto x(t)$ establishes a one-to-one correspondence between the intervals $[0, \infty)$ and $(0, 1]$, and that $\tf$ defined in \eqref{F: tf} satisfies
\[ \tf = \inf \{t: y(t) \leq 0\}. \]
Furthermore,
\[ \max_{0 \leq t \leq \tf} y(t) = f(x_{\max}\gt) = \lmax{GM}\gt. \]

The same line of reasoning applies in the cases $\te = 0$ and $\te = 1$, with the function $f$ given in~\eqref{F: Function f},
\[ x_{\max}(\gamma, 0) = \left(1 + \dfrac{1}{\gamma }\right)^{-1},
\quad \text{and} \quad
x_{\max}(\gamma, 1) = \exp\left(-\dfrac{1}{1 + \gamma}\right). \qedhere \]
\end{proof}

As we show in Section~\ref{SS: Proof}, Lemma~\ref{L: Max} plays an essential role in the proof of Theorem~\ref{T: gm}, since formula~\eqref{F: Max y} is a key ingredient for the argument.  The next lemma provides another significant result, establishing that the sequence of the absorption times for the coupled process converges almost surely to $\tf$.

\begin{lemma}
\label{L: Conv tf}
We have that
\[ \lim_{N \to \infty} \ttabs = \tf \quad \alsu \]
\end{lemma}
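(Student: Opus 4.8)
The plan is to sandwich $\ttabs$ between $\tf-\eps$ and $\tf+\eps$ for every $\eps>0$, using the almost sure uniform convergence $\tilde v^{(N)}\to v$ from Lemma~\ref{L: Conv v} together with the shape of the limiting curve $t\mapsto y(t)=f(x(t))$ recorded in \eqref{F: Sol} and analysed in Lemma~\ref{L: Max}. Recall that $y(0)=0$, that $y(t)>0$ for $t\in(0,\tf)$, that $x(\tf)=\xf$ (see \eqref{F: tf}) so that $y(\tf)=0$, and that $y$ crosses the level $0$ transversally from above at $\tf$: indeed $f$ is increasing at its unique zero $\xf$ while $x(t)=e^{-\la t}$ is strictly decreasing, whence $y(t)<0$ for all $t>\tf$. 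Throughout I would work with the free density-dependent chain on $\bbZ^2$ carrying the rates \eqref{F: Rates TCP}; it coincides with the absorbed process up to the first hitting time of $0$, so $\ttabs$ is unchanged.

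First I would prove the upper bound $\limsup_N\ttabs\le\tf$. Fix $\eps>0$, so that $y(\tf+\eps)<0$. A key structural observation is that the chain cannot jump over the value $0$ from above: at height $1$ the transition $\ell_1=(0,-2)$ has rate $\la\tu(\tilde Y^{(N)}-1)/2=0$, so the only downward move then available is $\ell_2$ to $0$; consequently the first time $\tilde Y^{(N)}\le 0$ equals $\ttabs$. Hence on the event $\{\ttabs>\tf+\eps\}$ we have $\tilde Y^{(N)}(t)\ge 1$ for all $t\le\tf+\eps$, so $\tilde y^{(N)}(\tf+\eps)\ge 1/N>0$. Since Lemma~\ref{L: Conv v} gives $\tilde y^{(N)}(\tf+\eps)\to y(\tf+\eps)<0$ almost surely, this event is impossible for $N$ large; thus $\ttabs\le\tf+\eps$ eventually, and letting $\eps\to0$ yields $\limsup_N\ttabs\le\tf$ almost surely.

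For the lower bound, fix $\eps>0$ and a small $\eta\in(0,\tf)$. On the compact interval $[\eta,\tf-\eps]\subset(0,\tf)$ the curve $y$ is bounded below by $c:=\min_{[\eta,\tf-\eps]}y>0$, so by Lemma~\ref{L: Conv v} almost surely $\tilde y^{(N)}\ge c/2>0$ on $[\eta,\tf-\eps]$ for all large $N$; in particular $\tilde Y^{(N)}\ge 1$ there, and the chain does not reach $0$ on $[\eta,\tf-\eps]$. It therefore remains to exclude \emph{early absorption}, namely the event that the chain hits $0$ during the initial window $[0,\eta]$. This is the main obstacle: precisely because $y(0)=0$, the uniform convergence of Lemma~\ref{L: Conv v} does not by itself keep $\tilde y^{(N)}$ away from $0$ near $t=0$, where both the limit and the process are of small size.

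To overcome it I would exploit the form of the time-changed rates \eqref{F: Rates TCP} in the small-size regime. While few spreaders and stiflers are present, the upward rate $\la\tilde X^{(N)}\sim\la N$ dominates the downward rates $\la\tu(\tilde Y^{(N)}-1)/2$ and $\la\td(\tilde Y^{(N)}-1)+\la\ga(N+1-\tilde X^{(N)}-\tilde Y^{(N)})$; indeed at height $1$ the $(0,-2)$ rate vanishes and the $(0,-1)$ rate equals $\la\ga$ times the current number of stiflers. A comparison of the embedded jump chain with a strongly upward-biased random walk (equivalently, a supercritical escape argument) then shows that, before the stifler count reaches a small multiple of $N$, the chain attains a positive density $\delta<\lmax{GM}\gt$ without returning to $0$, with probability tending to one. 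To upgrade this to an almost sure statement I would use that a return to $0$ within $[0,\eta]$ is an atypically large fluctuation against a strictly positive drift, hence exponentially unlikely in $N$; the exponential estimates available for density-dependent chains (Ethier and Kurtz~\cite[Chapter~11]{MPCC}) make $\sum_N\mathbb{P}(\ttabs<\eta)<\infty$, and Borel--Cantelli then gives that, almost surely, the chain avoids $0$ on $[0,\eta]$ for $N$ large. Combining the three regimes yields $\ttabs\ge\tf-\eps$ eventually; letting $\eta,\eps\to0$ completes the lower bound, and together with the upper bound this proves $\ttabs\to\tf$ \alsu. The control of this initial boundary layer, where the fluid limit touches $y=0$, is the crux of the argument.
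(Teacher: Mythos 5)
Your proposal coincides with the paper's proof only in its easy parts, and the comparison is instructive. The paper's entire proof is two lines: $y'(0)>0$ and $y'(\tf)<0$, hence $y(\tf-\eps)>0$ and $y(\tf+\eps)<0$ for every $0<\eps<\tf$, ``and the result follows from Lemma~\ref{L: Conv v}''. Your upper bound is precisely this argument made rigorous: the observation that the chain cannot jump over $0$, because the $\ell_1$ rate $\la\tu(\tilde Y^{(N)}-1)/2$ vanishes at $\tilde Y^{(N)}=1$, is exactly what converts ``$\tilde y^{(N)}(\tf+\eps)<0$'' into ``$\ttabs\le\tf+\eps$'', and the paper leaves it implicit. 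Your treatment of $[\eta,\tf-\eps]$ is likewise the paper's argument. Where you go beyond the paper is the boundary layer $[0,\eta]$, which the paper does not discuss at all; you are right that uniform convergence of the rescaled paths gives no information there, since both $y$ and $\tilde y^{(N)}$ are $o(1)$ near $t=0$.

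Unfortunately, the step you propose for this boundary layer rests on a false estimate. Early absorption is not exponentially rare in $N$: whenever $\tu>0$ (allowed in the general model of Theorem~\ref{T: gm}, and the case of the \apdk), the second jump of the time-changed chain, taken from the state $(N-1,2)$, is the absorbing transition $\ell_1=(0,-2)$ with probability
\[
\frac{\tu/2}{(N-1)+\tu/2+\td}=\Theta(1/N),
\]
so $P(\ttabs<\eta)\ge c/N$ for all large $N$ and $\sum_N P(\ttabs<\eta)=\infty$. The first Borel--Cantelli lemma is therefore inapplicable, and no large-deviation bound for density-dependent chains can repair this: such bounds control deviations of the rescaled path at macroscopic densities, whereas this event is produced by a single unlucky jump involving two individuals, invisible at the fluid scale. (When $\tu=0$, as in the \apmt, the probability is $O(N^{-2})$ and your summability argument would go through, but the lemma is stated for the general model.) Moreover, the obstruction is structural rather than a matter of sharpening your bound: since the marginal probabilities are not summable, if the chains for different $N$ were independent, the second Borel--Cantelli lemma would produce early absorption infinitely often and the conclusion of the lemma would fail. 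The almost sure convergence is thus intrinsically a statement about the joint coupling of the chains across $N$ --- the Ethier--Kurtz construction in which all of them are driven by the same Poisson processes, the very coupling underlying Lemma~\ref{L: Conv v} --- and a correct proof of the lower bound must exploit it, not just per-$N$ probabilities. For instance: under that coupling, a down-jump occurring while only $K$ individuals are informed forces one of two fixed Poisson clocks to have a point within distance $O(K^2/N)$ of the origin, which almost surely happens for only finitely many $N$, even with $K=K_N=\log N$; from $\log N$ spreaders, your gambler's-ruin comparison gives probability $O(N^{-2})$ of hitting $0$ before density $\eta$ (now summable); and above density $\eta$ your macroscopic argument applies. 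With such a three-scale argument your plan becomes a proof; as written, its key step fails. The paper, for its part, never confronts the boundary layer, so your diagnosis of the difficulty is sharper than the paper's treatment, but your resolution does not work as stated.
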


\begin{proof}
Notice that $y^{\prime}(0) > 0$ and $y^{\prime} (\tf) = \la (1 + \ga) \, \xf - \la \ga < 0$, whence $y(\tf - \eps) > 0$ and $y(\tf + \eps) < 0$ for every $0 < \eps < \tf$.
The result follows from Lemma~\ref{L: Conv v}.
\end{proof}

\begin{remark}
As proved in Lebensztayn et al.~\cite{LTRM},
\[ \lim_{N \to \infty} \dfrac{X^{(N)}(\tau^{(N)})}{N} = x(\tf) = \xf \quad \alsu \]
In other words, \(\xf\) is the limiting proportion of ignorant individuals remaining in the population.
For both Daley--Kendall and Maki--Thompson models, we have that
\[ \xf = - \dfrac{\LW{- 2 \, e^{-2}}}{2} \approx 0.2032, \]
where $W_0$ is the principal branch of the Lambert $W$ function.
More details on this function can be found in Corless et al.~\cite{LW}.
\end{remark}

\subsection{Proof of Theorem \ref{T: gm}}
\label{SS: Proof}

In view of formula~\eqref{F: Max}, it suffices to show that
\begin{equation}
\label{F: Conv Max}
\lim_{N \to \infty} \tpmax = \lmax{GM}\gt \quad \alsu
\end{equation}
Let \((\Omega, \mathcal{F}, P)\) denote the probability space where both the processes \(\{V^{(N)}(t)\}_{t \geq 0}\) and \(\{\tilde{V}^{(N)}(t)\}_{t \geq 0}\) are defined.
When necessary, we will make explicit the dependence on \(\omega\) in the notation of random variables defined on \(\Omega\); for instance, we write \(\ttabs(\omega)\) instead of \(\ttabs\).
Also, let $ \mathbb{Q}_{+}$ be the set of positive rational numbers.
To prove~\eqref{F: Conv Max}, we first note that for any \(\omega \in \Omega\) and \(N \geq 3\), the function \(u \mapsto \max_{0 \leq t \leq u} \tilde{y}^{(N)}(t, \omega)\) is nondecreasing. 
Moreover,
\begin{equation}
\label{F: yabs}
\tilde{y}^{(N)}(t) = \tilde{y}^{(N)}(\ttabs) \text{ for all } t \geq \ttabs.
\end{equation}

Now we define the events
\begin{align*}
E_k &= \left\{\omega \in \Omega: \lim_{N \to \infty} \max_{0 \leq t \leq k} \tilde{y}^{(N)}(t, \omega) = \max_{0 \leq t \leq k} y(t)\right\}, k \in \bbQP,\\[0.1cm]
E &= \left\{\omega \in \Omega: \lim_{N \to \infty} \ttabs(\omega) = \tf\right\},\\[0.1cm]
E^{\ast} &= E \cap \bigl(\bigcap_{k \in \bbQP} E_k\bigr).
\end{align*}
From Lemmas~\ref{L: Conv v} and \ref{L: Conv tf}, we have that \(P(E) = 1\) and \(P(E_k) = 1\) for every \(k \in \bbQP\).
Hence \(P(E^{\ast}) = 1\), and it remains to establish that
\begin{equation}
\label{F: Imp}
\omega \in E^{\ast} \Rightarrow \lim_{N \to \infty} \max_{0 \leq t \leq \ttabs(\omega)} \tilde{y}^{(N)}(t, \omega) = \lmax{GM}\gt.
\end{equation}

Let \(\omega \in E^{\ast}\), and choose \(k \in \bbQP\) such that \(k > \tf\). 
Since \(\omega \in E^{\ast} \subset E\), it follows that there exists \(N_0\) such that \(k > \ttabs(\omega)\) whenever \(N \geq N_0\).
Consequently, from~\eqref{F: yabs} and the fact that \(\omega \in E_k\), we conclude that
\[ \lim_{N \to \infty} \max_{0 \leq t \leq \ttabs(\omega)} \tilde{y}^{(N)}(t, \omega) = 
\lim_{N \to \infty} \max_{0 \leq t \leq k} \tilde{y}^{(N)}(t, \omega) = 
\max_{0 \leq t \leq k} y(t). \]
Then we take a sequence \(\{k_i\}_{i \geq 1}\) such that \(k_i \in \bbQP\) and \(k_i \searrow \tf\) as \(i \to \infty\). 
Thus, for every \(i \geq 1\),
\[ \lim_{N \to \infty} \max_{0 \leq t \leq \ttabs(\omega)} \tilde{y}^{(N)}(t, \omega) = \max_{0 \leq t \leq k_i} y(t) . \]
So, letting \(i \to \infty\) and using~\eqref{F: Max y} from Lemma~\ref{L: Max}, we obtain
\[ \lim_{N \to \infty} \max_{0 \leq t \leq \ttabs(\omega)} \tilde{y}^{(N)}(t, \omega) = \max_{0 \leq t \leq \tf} y(t) = \lmax{GM}\gt. \]
This finishes the verification of~\eqref{F: Imp}, which completes the proof of Theorem~\ref{T: gm}.

\section{Final Remarks}
\label{S: Final Remarks}

Since they were formulated, the Daley--Kendall and Maki--Thompson models have been the focus of much research into how a piece of information spreads, from a theoretical point of view. For homogeneously mixed populations, Lebensztayn et al.~\cite{LTRM,RPRS} showed that the theory of density-dependent Markov chains can be used to get strong results about how these random models behave as the population size gets bigger. 
Similar topics were examined in Grejo and Rodriguez~\cite{GR}, Oliveira and Rodriguez~\cite{OR}, and Rada et al.~\cite{MRR}, for rumor and related models. In this work, we have taken a step forward in understanding the basic models by proving a limit theorem for the maximum proportion of spreaders, i.e., the peak of the propagation. This has been achieved through a suitable application of the theory of convergence of density-dependent Markov chains. The advantage of using this theory is that we can obtain results that not only identify some asymptotic values of interest, like the maximum proportion studied here, but also, with a bit more work, can describe the random fluctuations between these values and those from the original stochastic process.

It is worth pointing out that classical rumor processes assume that every spreader--ignorant contact has perfect transmission, and every frustrated communication interaction certainly results in stifling. These models are significantly embellished by the $\ap$-probability variants, which provide a more accurate representation of human behavior in rumor spreading. Our formulation of a general stochastic rumor model offers a unified mathematical framework that includes the standard models and their $\ap$-probability counterparts. This framework also allows for flexibility, enabling the parameters to be adjusted according to different social contexts and rumor types. From a statistical perspective, developing general rumor systems is advantageous, as traditional models often do not reflect real-world data. Nevertheless, introducing new models presents challenges, such as dealing with high-dimensional parameter spaces, managing constraints, and requiring time-series data across multiple realizations. To estimate the involved parameters, classical statistical techniques can be employed, such as methods based on martingale arguments and maximum likelihood procedures like the ones proposed by \cite{fierro} for the classical SIR epidemic model.


\section*{Acknowledgments}
The authors gratefully acknowledge the thoughtful comments and suggestions of two anonymous referees, which helped us to improve the paper.
This study was financed, in part, by the Conselho Nacional de Desenvolvimento Científico e Tecnológico (CNPq), Grant 316121/2023-1, the Funda\c{c}\~ao de Amparo \`a Ci\^encia e Tecnologia do Estado de Pernambuco (FACEPE), Grant APQ-1341-1.02/22, and the Funda\c{c}\~ao de Amparo \`a Pesquisa do Estado de S\~ao Paulo (FAPESP), Brazil, Process Number \#2023/13453-5.

\section*{Declarations}

\noindent
{\bf Conflict of interest.} The authors have no financial or proprietary interests in any material discussed in this article.


\begin{thebibliography}{99}


\bibitem{AB}
Andersson H, Britton T (2000) Stochastic epidemic models and their statistical analysis. Springer-Verlag, New York. \url{https://doi.org/10.1007/978-1-4612-1158-7}

\bibitem{LW} 
Corless RM, Gonnet GH, Hare DEG, Jeffrey DJ, Knuth DE (1996)
On the Lambert \(W\) function. Adv. Comput. Math. 5:329--359. \url{https://doi.org/10.1007/BF02124750}

\bibitem{DG}
Daley DJ, Gani J (1999) Epidemic Modelling: an Introduction. 
Cambridge Univ. Press, Cambridge. \url{
    https://doi.org/10.1017/CBO9780511608834}

\bibitem{DK}
Daley DJ, Kendall DG (1965) Stochastic rumours. J. Inst. Math. Appl. 1:42--55. \url{https://doi.org/10.1093/imamat/1.1.42}


\bibitem{di-crescenzo-24}
Di Crescenzo A, Paraggio P (2024) Modelling the random spreading of fake news through a two-dimensional time-inhomogeneous birth-death process. Math. Methods Appl. Sci. 47:13621-13650. \url{https://doi.org/10.1002/mma.10211}

\bibitem{di-crescenzo}
Di Crescenzo A, Paraggio P, Spina S (2023) Stochastic growth models for the spreading of fake news. Mathematics 2023(11):3597. \url{https://doi.org/10.3390/math11163597}

\bibitem{esmaeeli}
Esmaeeli N, Sajadi F (2023) On the probability of rumor survival among sceptics. J. Appl. Probab. 60(3):1096-1111. \url{https://doi.org/10.1017/jpr.2022.113 }


\bibitem{MPCC} 
Ethier SN, Kurtz TG (1986) Markov Processes. Characterization and Convergence. Wiley Series in Probability and Mathematical Statistics, John Wiley \& Sons Inc., New York. \url{https://doi.org/10.1002/9780470316658}

\bibitem{fierro}
Fierro R., Leiva V., Balakrishnan N (2015) Statistical Inference on a Stochastic Epidemic Model. Comm. Statist. Simulation Comput. 44 (9):2297-2314. \url{https://doi.org/10.1080/03610918.2013.835409}

\bibitem{GR}
Grejo C, Rodriguez PM (2019) Asymptotic behavior for a modified Maki--Thompson model with directed inter-group interactions. J. Math. Anal. Appl. 480:123402. \url{https://doi.org/10.1016/j.jmaa.2019.123402}

\bibitem{Hayes}
Hayes B (2005) Rumours and Errours. American Scientist 93:207--211. 

\bibitem{LTRM}
Lebensztayn E, Machado FP, Rodriguez PM (2011) Limit theorems for a general stochastic rumour model. SIAM J. Appl. Math. 71:1476--1486. \url{https://doi.org/10.1137/100819588}

\bibitem{RPRS}
Lebensztayn E, Machado FP, Rodriguez PM (2011) On the behaviour of a rumour process with random stifling. Environ. Modell. Softw. 26:517--522. \url{https://doi.org/10.1016/j.envsoft.2010.10.015}

\bibitem{LDPMT}
Lebensztayn E (2015) A large deviations principle for the Maki--Thompson rumour model. J. Math. Anal. Appl. 432:142--155. \url{https://doi.org/10.1016/j.jmaa.2015.06.054}


\bibitem{mahmoud}
Mahmoud H. (2020) A model for the spreading of fake news. J. Appl. Probab. 57(1):332-342. \url{https://doi.org/10.1017/jpr.2019.103}

\bibitem{MT}
Maki DP, Thompson M (1973) Mathematical Models and Applications. With emphasis on the social, life, and management sciences. Prentice-Hall, Englewood Cliffs, N.~J.

\bibitem{OR}
Oliveira KBE, Rodriguez PM (2020) Limit theorems for a stochastic model of adoption and abandonment innovation on homogeneously mixing populations. J. Stat. Mech. 2020:033404. \url{https://iopscience.iop.org/article/10.1088/1742-5468/ab780c}


\bibitem{paraggio}
Paraggio P, Spina S (2025) A spreaders-ignorants-skeptics model for the spreading of fake news and a related finite birth-death process. In: Quesada-Arencibia, A., Affenzeller, M., Moreno-D\'iaz,
R. (eds) Computer Aided Systems Theory EUROCAST 2024. Lecture Notes in Computer Science 15174:213-227. \url{https://doi.org/10.1007/978-3-031-83885-9_20}

\bibitem{Pearce}
Pearce CEM (2000) The exact solution of the general stochastic rumour. Math. Comput. Modelling 31:289--298. \url{https://doi.org/10.1016/S0895-7177(00)00098-4}

\bibitem{Pittel}
Pittel B (1990) On a Daley--Kendall model of random rumours. J. Appl. Probab. 27:14--27. \url{https://doi.org/10.2307/3214592}

\bibitem{MRR}
Rada A, Coletti C, Lebensztayn E, Rodriguez PM (2021) The role of multiple repetitions on the size of a rumor. SIAM J. Appl. Dyn. Syst., 20:1209--1231. \url{https://doi.org/10.1137/20M1345657}

\bibitem{san-martin}
San Mart\'in J, Drubi F, Rodr\'iguez P\'erez D (2020) Uncritical polarized groups: the impact of spreading fake news as fact in social networks. Math. Comput. Simul. 178:192-206. \url{https://doi.org/10.1016/j.matcom.2020.06.013}

\bibitem{Sudbury}
Sudbury A (1985) The proportion of the population never hearing a rumour. J. Appl. Probab. 22:443--446. \url{https://doi.org/10.2307/3213787}

\bibitem{Watson}
Watson R (1988) On the size of a rumour. Stochastic Process. Appl., 27:141--149. \url{https://doi.org/10.1016/0304-4149(87)90010-X}

\end{thebibliography}
\end{document}